\documentclass[conference]{IEEEtran}
\IEEEoverridecommandlockouts

\usepackage[utf8]{inputenc} 
\usepackage[T1]{fontenc}
\usepackage{url}
\usepackage{ifthen}
\usepackage[noadjust]{cite}
\usepackage[cmex10]{amsmath} 
\allowdisplaybreaks
\usepackage{amssymb}
\usepackage{graphicx}
\usepackage{epstopdf}
\usepackage{epsfig}
\usepackage{multirow}
\usepackage{amsthm}
\usepackage{comment}
\usepackage{caption}
\usepackage{enumitem}
\usepackage{amsmath}
\usepackage{color}
\usepackage{bbm}
\usepackage{algorithm}
\usepackage{algpseudocode}
\usepackage{csquotes}
\usepackage{subfiles}
\usepackage{caption, multirow, makecell}
\usepackage{array}
\usepackage{caption}
\usepackage{nicematrix}
\usepackage{cuted}
\usepackage{subcaption}
\usepackage{mathtools}
\usepackage{blkarray}
\DeclarePairedDelimiter\ceil{\lceil}{\rceil}

\usepackage[colorlinks=true, linkcolor=black, citecolor=black, urlcolor=black]{hyperref}



\newtheorem{thm}{Theorem}

\newtheorem{cor}{Corollary}
\newtheorem{example}{Example}

\newtheorem{rem}{Remark}

\allowdisplaybreaks

\def\BibTeX{{\rm B\kern-.05em{\sc i\kern-.025em b}\kern-.08em
    T\kern-.1667em\lower.7ex\hbox{E}\kern-.125emX}}

\begin{document}
\title{Fundamental Limits of Multi-User Distributed Computing of Linearly Separable Functions }
\author{\IEEEauthorblockN{K.~K.~Krishnan Namboodiri, Elizabath Peter, Derya Malak, and Petros Elia\\}
	\IEEEauthorblockA{ Communication Systems Department, EURECOM, Sophia-Antipolis, France \\
		E-mail: \{karakkad, peter, malak, elia\}@eurecom.fr}
       \thanks{This research was partially supported by European Research Council ERC-StG Project SENSIBILITE under Grant 101077361, by the Huawei France-Funded Chair Toward Future Wireless Networks, and in part by the Program ``PEPR Networks of the Future” of France 2030.}
}
\maketitle
\begin{abstract}
This work establishes the fundamental limits of the classical problem of multi-user distributed computing of linearly separable functions. In particular, we consider a distributed computing setting involving $L$ users, each requesting a linearly separable function over $K$ basis subfunctions from a master node, who is assisted by $N$ distributed servers. At the core of this problem lies a fundamental tradeoff between communication and computation: each server can compute up to $M$ subfunctions, and each server can communicate linear combinations of their locally computed subfunctions outputs to at most $\Delta$ users. The objective is to design a distributed computing scheme that reduces the communication cost (total amount of data from servers to users), and towards this, for any given $K$, $L$, $M$, and $\Delta$, we propose a distributed computing scheme that jointly designs the task assignment and transmissions, and shows that the scheme achieves optimal performance in the real field under various conditions using a novel converse. We also characterize the performance of the scheme in the finite field using another converse based on counting arguments.
\end{abstract}

\begin{IEEEkeywords}
Coded distributed computing, linearly separable functions, matrix factorization, communication-computation tradeoff.
\end{IEEEkeywords}

\section{Introduction}
Distributed computing systems are imperative for handling the computationally intensive data-driven tasks that arise in many modern applications. This reality has brought to the fore various distributed computing frameworks, such as MapReduce~\cite{DeG} and Spark \cite{ZCFSS}, that can successfully parallelize computations across clusters of computing nodes. The efficiency of distributed computing is naturally a function --- among other things --- of the network topology, and of the nature of the requested tasks. A very common setting --- which is what we focus on here --- considers multiple servers, multiple clients/users, and linearly-separable tasks/functions. In particular, we consider a setting with $N$ servers and $L$ users, where each user $\ell \in [L]$ wishes to compute a distinct function $F_{\ell}(.)$ that depends on a set of $K$ files $\mathcal{W}=\{W_1, W_2, \ldots, W_K\}$ stored at a master node. As one might expect, the functions $F_{\ell}(\mathcal{W})$, $\ell \in [L]$, can be written as the linear combination of $K$ distinct subfunctions $f_k(W_k)$, $k \in [K]$, and the coefficients are captured by a demand matrix $\mathbf{D}$ of dimension ${L \times K}$. We here assume that the master node coordinates $N$ servers, and that each server can perform at most $M \leq K$ subfunction computations, subject to an additional constraint on the maximum number of users that each serve concurrently. We focus on the practical scenario where each server can communicate messages that are linear functions of their locally computed subfunction outputs, and can do so to a carefully chosen set of up to $\Delta \leq L$ users. No communication among servers is allowed. 
In this context, our objective is to design a distributed computing scheme that achieves the minimum communication cost (minimum rate, i.e., minimum amount of information transmitted from servers to users) for a given set of demands under the computational constraint $M$ and the connectivity constraint $\Delta$. The computation–communication tradeoff has been extensively investigated across various distributed computing frameworks \cite{LMYA, YaYW, NPME, YAb, TLDK, KhE, KhE2, MSSE, BrEl, PNR, PLE, YMA, YLRKSA, WSJC, WSJC2, LLPPR, DFHJCG}. \nocite{MaE, KhE3}
\subsection{Related Works}
Our considered setting is closest to the distributed computing model in \cite{KhE}, which employs tesselation-based tilings to reduce the number of servers, as well as the work in 
\cite{KhE2}, which considered ${F}_{\ell}(\mathcal{W})$'s over a finite field and which designed schemes based on covering codes. Connections can also be found with the work in~\cite{NPME}, which addressed a single-user variant of our setting, and which provided bounds derived from matrix-entropic inequalities and covering designs. 
Naturally, links can also be established with the works in \cite{WSJC, WSJC2} which again study the single-user linearly separable function computation problem in the presence of stragglers, as well as with the work in \cite{TaM} which considered the distributed computation of a linearly separable Boolean function requested by a single user.

\subsection{Our Contributions}
In this work, we study the multi-user distributed computing problem of linearly separable functions under server–user connectivity and computational constraints. Our contributions are summarized as follows.
\begin{itemize}
    \item We first propose an achievability scheme (Theorem~\ref{thm1}) for the multi-user setting based on a partitioning of the demand matrix into suitable subblocks. For each such subblock, we apply a left nullspace–based construction that jointly designs the dataset assignment and the server transmissions to guarantee exact decodability. This nullspace-based design builds on ideas introduced in our recent work \cite{NPME}, which studied the single-user version of the distributed linearly separable function computation problem. In contrast to \cite{NPME}, which is restricted to a single-user setting, the present work extends the framework to the multi-user setting, while retaining validity over general system parameters and arbitrary fields.
    \item We then derive a converse bound on the communication cost that is applicable for computing over finite fields via an equivalent matrix factorization formulation of the problem (Theorem \ref{thm:lb}). In this formulation, the demand matrix is represented as a product $\mathbf{D} = \mathbf{C}\mathbf{A}$, where the sparsity patterns of the factor matrices \(\mathbf{C}\) and \(\mathbf{A}\) are dictated by the $\Delta$-connectivity constraint and the $M$-computing capability constraint of the servers, respectively. The inner dimension of this factorization corresponds to the communication cost, and the converse is obtained by lower bounding the minimum achievable inner dimension over all such factorizations. Using non-trivial counting arguments, we establish a novel converse on the communication cost. When specialized to the single-user setting considered in~\cite{NPME}, the proposed converse is strictly stronger than the previously known bound for every \(q>2\). Moreover, under certain divisibility conditions on $K$, $L$, $M$, and $\Delta$, we show that the achievable communication cost meets the converse as the field size $q \to \infty$. For finite $q \geq K$, we show that our scheme is within a factor of $3$ from the optimal rate when $\Delta$ divides $L$ and $(\Delta+M-1)$ divides $K$, and within a factor of $8$ otherwise (Theorem~\ref{thm:optgap}).
    \item For real fields, we derive an additional converse bound using the same matrix factorization representation of the problem (Theorem~\ref{thm:conv_real}). In this setting, we count the degrees of freedom involved in factorizing the demand matrix as $\mathbf{D} = \mathbf{C}\mathbf{A}$ to lower bound the inner dimension, and hence the communication cost. When $\Delta$ divides $L$ and $(\Delta+M-1)$ divides $K$, we show that the achievable rate in Theorem \ref{thm1} matches this converse bound, and is within a factor of $4$ otherwise.

    \end{itemize}

{\bf Notations.} 
For $m, n \in \mathbb{Z}^+$, we let $[n] = \{1,2,\ldots,n \}$ and $m | n $ denotes $m$ divides $n$. The number of non-zero entries in a vector $\mathbf{x}$ is denoted by $||\mathbf{x}||_0$. A vector $\mathbf{x}$ is said to be $k-$sparse if $||\mathbf{x}||_0 \leq k$. The cardinality of a set $\mathcal{S}$ is denoted by $|\mathcal{S}|$. For any $x \in \mathbb{R}^+$, $\ceil{x}$ denotes the smallest integer greater than or equal to $x$. The vertical concatenation of two matrices $\mathbf{A}_{m_1 \times n}$ and $\mathbf{B}_{m_2 \times n}$ are denoted by $[\mathbf{A};\mathbf{B}]_{(m_1+m_2) \times n}$. The $i$-th row and the $j$-th column of a matrix $\mathbf{A}_{m \times n}$ are denoted by $\mathbf{A}(i,:)$ and $\mathbf{A}(:,j)$, respectively.

\section{System Model}

\begin{figure}
\begin{center}   
\includegraphics[width=0.9\linewidth]{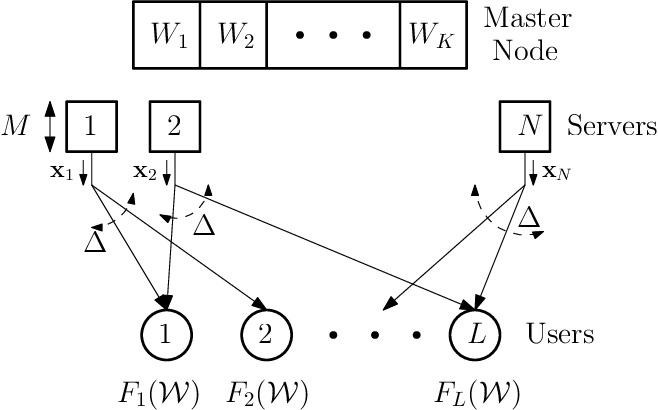}
\caption{A multi-user distributed computing system.}
\label{SystemModel}
\end{center}
\end{figure}

Consider a distributed computing system with $L$ users, each of whom wants to compute a function that depends on a library $\mathcal{W}=\{W_1, W_2, \ldots, W_K\}$ stored at a master node, where $W_i \in \mathbb{F}^B$ for some field $\mathbb{F}$ and $B \in \mathbb{Z}^+$. The requested functions $F_{\ell}(\mathcal{W})$, $\ell \in [L]$, admit linear separability over $K$ basis subfunctions $f_k(W_k)$, $k \in [K]$, as 
\begin{equation}
 F_{\ell}(\mathcal{W}) = \sum_{k=1}^{K}d_{\ell,k}f_k(W_k),\quad \forall \ell \in [L]
 \label{eq:sep}
\end{equation}
\noindent where $d_{\ell,k} \in \mathbb{F}$ and $f_k(W_k)$ can be a linear or a non-linear function. Let $\mathbf{D}=[d_{\ell,k}], {\ell \in [L], k \in [K]},$ be the matrix formed by the coefficients $d_{\ell,k}$ in \eqref{eq:sep}. Then, the $L$ users' demands can be represented as $\mathbf{F}(\mathcal{W})=\mathbf{D}\mathbf{f}$, where $\mathbf{F}(\mathcal{W})=[ F_1(\mathcal{W}), \ldots, F_L(\mathcal{W})]^{\intercal}$ and $\mathbf{f} = [f_1(W_1), \ldots, f_K(W_K)]^{\intercal}$. Since $F_{\ell}(\mathcal{W})$ is linearly separable, the computation of $L$ functions is distributed to $N$ servers with identical computing capabilities. Each server can compute at most $M \leq K$ subfunctions, and can serve at most $\Delta \leq L$ users. Figure~\ref{SystemModel} illustrates the $L$-user distributed computing system described above. The system consists of three phases: \textit{demand phase}, \textit{computing phase}, and \textit{communication phase}.

\textit{Demand phase:} In this phase, each user $\ell \in [L]$ communicates its demand $F_{\ell}(\mathcal{W})$ to the  master node.

 \textit{Computing phase:} Once the demands of all users are known, the master node assigns each server a set of subfunctions to be computed, subject to its computational capacity $M$ (also referred to as computational cost). Let $\mathcal{M}_n \subseteq [K]$ denote the indices of the subfunctions computed by server $n$, where $|\mathcal{M}_n| \leq M$. Then, server $n$ evaluates all those $f_k(W_k)$'s such that $ \mathcal{M}_n \ni k$. The set $\mathcal{M} = \{\mathcal{M}_1, \mathcal{M}_2, \ldots, \mathcal{M}_N \}$ represents the computational tasks assigned to all $N$ servers in the system.

\textit{Communication phase:} In this phase, each server linearly encodes its subfunction values and transmits $r_n$ messages to a subset of users. We consider linear encoding schemes that do not subpacketize the subfunction outputs. The message sent by server $n$ is denoted by
$x_{n,r}, r \in [r_n],$ and is of the form
\begin{align*}
x_{n,r} = \sum_{k \in \mathcal{M}_n}\alpha_{n,r,k}f_k(W_k)
\end{align*}
where $\alpha_{n,r,k} \in \mathbb{F}$. The message $\mathbf{x}_n = [x_{n,1}, \ldots, x_{n,r_n}]$, $n \in [N]$, is intended for at most $\Delta \leq L$ users. Let $\mathbf{x}$ denote the set of transmissions from all $N$ servers, and let $R=\sum_{n \in [N]}r_n$. Then,  $\mathbf{x}$ can be written as
{
\begin{equation*}
\mathbf{x} = \underbrace{\begin{bmatrix}
\alpha_{1,1,1} & \alpha_{1,1,2} & \ldots & \alpha_{1,1,K} \\
\vdots & \vdots & \ddots & \vdots \\
\alpha_{N,r_N,1} & \alpha_{N,r_N,2} & \ldots & \alpha_{N,r_N,K}\end{bmatrix}}_{\mathbf{A} }
\begin{bmatrix}
& f_1(W_1) \\
& \vdots \\
& f_K(W_K)
\end{bmatrix}
\end{equation*}}

\noindent where $\mathbf{A}$ is of dimension $R \times K$ and $||\mathbf{A}(r,:)||_0 \leq M$, $\forall r \in [R]$. The communication links between servers and users are assumed to be non-interfering and error-free. The users linearly combine the received messages from different servers to recover the requested functions. The decoding operation at the users can be represented as $\mathbf{F}(\mathcal{W})=\mathbf{C}\mathbf{x}$, where $\mathbf{C}_{L \times R}$ is composed of the coefficients used to recover the requested functions from the transmissions. Note that the columns of $\mathbf{C}$ have a $\Delta-$sparsity due to the communication constraint at each server. 
 Therefore, 
{
\begin{align*}
  \begin{bmatrix}
    F_1(\mathcal{W}) \\
    \vdots \\
    F_L(\mathcal{W})
 \end{bmatrix}& =
  \begin{bmatrix}
  c_{1,1} & c_{1,2} & \ldots & c_{1,R} \\
  \vdots & \vdots & \ddots & \vdots \\
   c_{L,1} & c_{L,2} & \ldots & c_{L,R}
  \end{bmatrix}
  \mathbf{x}
\end{align*}}
results in the following decomposition
\begin{equation}
\mathbf{D}=\mathbf{C}\mathbf{A}
\label{eq:decomp}
\end{equation}
where $||\mathbf{C}(:,r)||_0 \leq \Delta$, $||\mathbf{A}(r,:)||_0 \leq M$, $\forall r \in [R]$. Equation \eqref{eq:decomp} follows from $\mathbf{F}(\mathcal{W}) = \mathbf{D}\mathbf{f}$ and $\mathbf{x} = \mathbf{A}\mathbf{f}$. Hence, the design of a distributed computing scheme can be viewed as the joint design of two matrices $\mathbf{C}_{L \times R}$ and $\mathbf{A}_{R \times K}$ with the objective of minimizing $R$ under $\Delta-$sparsity and $M-$sparsity constraints on columns of  $\mathbf{C}$ and rows of $\mathbf{A}$, respectively.

One of the performance measures of a distributed computing scheme is communication cost, which is defined as the total amount of data transmitted by servers to users during the communication phase, normalized by the size of a message (assuming messages are of unit size). The communication cost required for a given demand matrix $\mathbf{D}$ under the computational cost $M$ and per-node serving capacity $\Delta$ is denoted by $R_{\mathbf{D}}(M, \Delta)$. Then, $R_{\mathbf{D}}(M, \Delta)$ is obtained as 
\begin{equation}
 R_{\mathbf{D}}(M,\Delta) = \sum_{n \in [N]}r_n.
\end{equation}
Our interest is in the worst-case communication cost, denoted by $R(K,L,M,\Delta)$, which is defined as 
\begin{equation*}
R(K,L,M,\Delta) = \max_{\mathbf{D}}R_{\mathbf{D}}(M,\Delta)
\end{equation*}
over all possible $\mathbf{D}$'s of dimension $L\times K$. The worst-case communication cost is referred to as the rate hereafter. Therefore, the optimal rate, denoted by $R^{*}(K,L,M,\Delta)$, is defined as 
\begin{align*}
R^{*}(K,L,M, \Delta) = \inf \{R(K,L,M,\Delta)&:  R(K,L,M,\Delta) \\ &\text{ is achievable}\}
\end{align*}
where the infimum is taken over all possible task assignments and linear transmission and decoding policies satisfying the computational cost $M$ and the communication constraint $\Delta$. 
We aim to identify the optimal communication-computation costs for a given distributed computing system and design a distributed computing scheme that achieves the optimal rate for a given $K,L,M,$ and $\Delta$.

\section{Achievability Results}
In this section, we present an achievable scheme for the multi-user distributed
linearly separable function computation problem that applies to all values of
$K$, $L$, $M$, and $\Delta$.
\begin{thm}
    \label{thm1}
    For the $(K,L,M,\Delta)$ distributed linearly separable function computation problem, the rate
    \begin{equation}
    \label{eq:thm1}
        R_{\text{ach}}(K,L,M,\Delta) = \Delta\left\lceil\frac{L}{\Delta}\right\rceil\left\lceil{\frac{K}{\Delta+M-1}}\right\rceil
    \end{equation}
    is achievable.   
\end{thm}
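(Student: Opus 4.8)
The plan is to prove achievability constructively: for any $L\times K$ demand matrix $\mathbf{D}$ I will exhibit a factorization $\mathbf{D}=\mathbf{C}\mathbf{A}$ respecting the $\Delta$-sparsity of the columns of $\mathbf{C}$ and the $M$-sparsity of the rows of $\mathbf{A}$, with inner dimension at most $\Delta\lceil L/\Delta\rceil\lceil K/(\Delta+M-1)\rceil$, and then read off a task assignment, transmission scheme, and decoding rule from $\mathbf{A}$ and $\mathbf{C}$ through \eqref{eq:decomp}. The first step is a block partition: split $[L]$ into $P_L:=\lceil L/\Delta\rceil$ groups of at most $\Delta$ users, and $[K]$ into $P_K:=\lceil K/(\Delta+M-1)\rceil$ groups of at most $\Delta+M-1$ subfunction indices. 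This decomposes $\mathbf{D}$ into blocks $\mathbf{D}^{(i,j)}$, $i\in[P_L]$, $j\in[P_K]$, of size $L_i\times N_j$ with $L_i\le\Delta$ and $N_j\le\Delta+M-1$.

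\emph{Sparse factorization of a block.} The core claim --- a concrete instance of the left-nullspace construction of \cite{NPME} --- is that any block with $L_i\le\Delta$, $N_j\le\Delta+M-1$ factors as $\mathbf{D}^{(i,j)}=\mathbf{C}^{(i,j)}\mathbf{A}^{(i,j)}$ where $\mathbf{A}^{(i,j)}$ has at most $\Delta$ rows, each $M$-sparse, and $\mathbf{C}^{(i,j)}$ has at most $\Delta$ columns. To prove it, set $d:=\min(\Delta,N_j)$; since the rank of $\mathbf{D}^{(i,j)}$ is at most $\min(L_i,N_j)\le d\le N_j$, one can pick a $d$-dimensional subspace $U\subseteq\mathbb{F}^{N_j}$ containing the row space of $\mathbf{D}^{(i,j)}$. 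Let $\mathbf{A}^{(i,j)}$ be a basis of $U$ in reduced row echelon form: each of its $d$ rows has a single pivot entry and is otherwise supported on the $N_j-d$ non-pivot coordinates, so it has at most $N_j-d+1\le M$ nonzero entries (the last bound uses $N_j\le\Delta+M-1$ and $M\ge 1$). Since the row space of $\mathbf{A}^{(i,j)}$ is $U$ and contains that of $\mathbf{D}^{(i,j)}$, a matrix $\mathbf{C}^{(i,j)}$ with $d\le\Delta$ columns exists, and its columns --- having at most $L_i\le\Delta$ entries --- are automatically $\Delta$-sparse.

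\emph{Assembly and interpretation.} I would then form the global matrices. Extend each $\mathbf{A}^{(i,j)}$ to $K$ columns by placing its entries in the positions of subfunction group $j$ and zeros elsewhere, and stack all such blocks vertically to obtain $\mathbf{A}$, whose rows remain $M$-sparse. Likewise, place each $\mathbf{C}^{(i,j)}$ in the rows of user group $i$ and the columns indexed by the transmissions of block $(i,j)$, with zeros elsewhere, to obtain $\mathbf{C}$; each column of $\mathbf{C}$ is then supported within a single user group, hence $\Delta$-sparse. A block multiplication gives $\mathbf{C}\mathbf{A}=\mathbf{D}$, since the only transmissions simultaneously decoded by user group $i$ (through $\mathbf{C}$) and involving subfunction group $j$ (through $\mathbf{A}$) are exactly those of block $(i,j)$, whose contribution is $\mathbf{C}^{(i,j)}\mathbf{A}^{(i,j)}=\mathbf{D}^{(i,j)}$. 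Reading \eqref{eq:decomp} backwards, each row of $\mathbf{A}$ is realized by one dedicated server that computes the at most $M$ subfunctions in its support and sends the prescribed linear combination to the at most $\Delta$ users in the support of the matching column of $\mathbf{C}$, and the users invert $\mathbf{C}$ to recover their demands. As each block contributes at most $\Delta$ rows to $\mathbf{A}$ and there are $P_L P_K$ blocks, the total number of messages is at most $\Delta P_L P_K$, which is the right-hand side of \eqref{eq:thm1}.

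I expect the sparse-factorization claim to be the conceptual heart of the argument but essentially immediate once one notices that reduced row echelon form automatically yields $M$-sparse generators in the regime $N_j\le\Delta+M-1$; the remaining effort is bookkeeping --- handling the boundary blocks where $L_i<\Delta$ or $N_j<\Delta+M-1$ (absorbed uniformly by the choice $d=\min(\Delta,N_j)$), and verifying the column sparsity of $\mathbf{C}$, the row sparsity of $\mathbf{A}$, and the identity $\mathbf{C}\mathbf{A}=\mathbf{D}$ after assembly. One point worth stating explicitly is that the number of servers $N$ is not constrained by the problem, so assigning one message per server is legitimate and the quantity above indeed upper bounds the rate $R=\sum_{n}r_n$.
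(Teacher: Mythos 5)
Your proof is correct, and it follows the same overall strategy as the paper: partition $\mathbf{D}$ into $\lceil L/\Delta\rceil\cdot\lceil K/(\Delta+M-1)\rceil$ blocks of size at most $\Delta\times(\Delta+M-1)$ and give each block a factorization with inner dimension at most $\Delta$ whose right factor has $M$-sparse rows. The difference is in how you obtain that block factorization. The paper picks $\Delta$ linearly independent columns $\mathcal{Y}_{i,j}$ of the block, builds one left-nullspace vector $\boldsymbol{\nu}_{i,j}^{\delta}$ per server so that $\boldsymbol{\nu}_{i,j}^{\delta}\mathbf{D}_{i,j}$ is supported on $\{y_\delta\}\cup(\mathcal{K}_j\setminus\mathcal{Y}_{i,j})$ (size at most $M$), and then proves by contradiction that the stacked matrix $\mathbf{V}_{i,j}$ is invertible so that users can recover $\mathbf{D}_{i,j}\mathbf{f}_j$. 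You instead take a $\min(\Delta,N_j)$-dimensional subspace containing the block's row space and use its reduced row echelon basis as $\mathbf{A}^{(i,j)}$; the pivot/non-pivot structure gives the same support pattern (pivots playing the role of $\mathcal{Y}_{i,j}$), and decodability is automatic because the row space containment directly yields $\mathbf{C}^{(i,j)}$ --- no separate invertibility argument is needed. Your route also absorbs uniformly the corner cases (rank-deficient blocks, blocks with fewer than $\Delta$ rows or columns) that the paper handles via footnotes, at the cost of a slightly less operational description of what each server transmits; the two constructions coincide up to row scaling whenever the block has full row rank. One small wording nit: users do not literally ``invert $\mathbf{C}$'' (it is $L\times R$ with $R\ge L$); user $\ell$ simply forms $\sum_r \mathbf{C}(\ell,r)x_r$, which it can do because it receives exactly the transmissions in the support of row $\ell$ of $\mathbf{C}$.
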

\begin{proof}

Consider \(\mathbf{D}\in\mathbb{F}^{L\times K}\). We partition \(\mathbf{D}\) into submatrices
\(\mathbf{D}_{i,j}\) with 
\(i\in\left[\left\lceil \frac{L}{\Delta}\right\rceil\right]\) and 
\(j\in\left[\left\lceil \frac{K}{\Delta+M-1}\right\rceil\right]\) as
\[
\mathbf{D}=
\begin{bmatrix}
\mathbf{D}_{1,1}&\cdots&\mathbf{D}_{1,\left\lceil \frac{K}{\Delta+M-1}\right\rceil}\\
\vdots&\ddots&\vdots\\
\mathbf{D}_{\left\lceil \frac{L}{\Delta}\right\rceil,1}
&\cdots&
\mathbf{D}_{\left\lceil \frac{L}{\Delta}\right\rceil,\left\lceil \frac{K}{\Delta+M-1}\right\rceil}
\end{bmatrix}.
\]
Each block \(\mathbf{D}_{i,j}\) has dimension
\(
\min\{\Delta,\,L-\Delta(i-1)\}\times
\min\{\Delta+M-1,\,K-(\Delta+M-1)(j-1)\}.
\)

    We now design the computing phase and the communication phase for the given demand matrix. 
    First, we consider a sub-demand matrix \(\mathbf{D}_{i,j}\). Corresponding to \(\mathbf{D}_{i,j}\), we require \(\Delta\) servers\footnote{The total number of servers required is \(N=\Delta\left\lceil\frac{L}{\Delta}\right\rceil\left\lceil\frac{K}{\Delta+M-1}\right\rceil\).}. We refer to them as the servers in {group \((i,j)\)}. 
    The set of column indices of \(\mathbf{D}_{i,j}\) is denoted by \(\mathcal{K}_j \triangleq \{(j-1)(\Delta+M-1)+1,(j-1)(\Delta+M-1)+2,\dots,\min(j(\Delta+M-1),K)\}\). Note that \(\mathcal{K}_j\) is also the indices of the files handled by the servers in group \((i,j)\). Without loss of generality\footnote{If $\mathrm{rank}(\mathbf{D}_{i,j}) < \Delta$—which may occur either due to linearly dependent rows in $\mathbf{D}_{i,j}$ or because $\mathbf{D}_{i,j}$ has fewer than $\Delta$ rows when $i = \left\lceil \frac{L}{\Delta} \right\rceil$—the rank can be made equal to $\Delta$ by replacing dependent rows or augmenting the matrix with additional linearly independent rows. Furthermore, the removed rows can be recovered as linear combinations of the remaining rows. If the number of columns itself is less than \(\Delta\), for \(j=\left\lceil\frac{K}{\Delta+M-1}\right\rceil\), the case is trivial and will be dealt with later in this proof. }, we assume that \(\mathrm{rank}(\mathbf{D}_{i,j})=\Delta\). Let \(\mathcal{Y}_{i,j} =\{y_1^{(i,j)},y_2^{(i,j)},\dots,y_\Delta^{(i,j)}\}\subseteq \mathcal{K}_j\) be the indices of the columns that form a \(\Delta \times \Delta\) full-rank submatrix of \(\mathbf{D}_{i,j}\). 
    Then, we form the task assignment set of the \(\Delta\) servers corresponding to \(\mathbf{D}_{i,j}\) as
     \begin{equation}
     \label{eq:dataset1}
         \mathcal{M}_{i,j} = \{\mathcal{M}_1^{(i,j)},\mathcal{M}_2^{(i,j)},\dots,\mathcal{M}_\Delta^{(i,j)}\} 
     \end{equation}
     where, for every \(\delta\in [\Delta]\)
     \begin{equation}
         \label{eq:datasetassign}
         \mathcal{M}_\delta^{(i,j)} = \left\{y_\delta^{(i,j)}\right\} \bigcup \left(\mathcal{K}_j\backslash \mathcal{Y}_{i,j}\right).
     \end{equation} 
     That is, the \(\delta\)-th server in the group \((i,j)\) is assigned to compute the subfunctions \(f_k(W_k)\) for every \(k\in \mathcal{M}_\delta^{(i,j)}\). Each server in group \((i,j)\) has access to all files indexed by \(\mathcal{K}_j\setminus\mathcal{Y}_{i,j}\). Moreover, within group \((i,j)\), the file indexed by \(y_\delta^{(i,j)}\) is uniquely assigned to the \(\delta\)-th server. From \eqref{eq:datasetassign}, it is straightforward to see that \(|\mathcal{M}_\delta^{(i,j)}|\leq M\) for every possible \(i,j,\) and \(\delta\). At this point, note that for a fixed $i$, all servers\footnote{Observe that there is a one-to-one correspondence between servers and task assignments $\mathcal{M}_\delta^{(i,j)}$.} corresponding to $j \in \left[\left\lceil \frac{K}{\Delta + M - 1} \right\rceil\right]$ and $\delta \leq \Delta$ are connected to the users indexed from $(i-1)\Delta + 1$ to $\min(i\Delta, L)$. This is consistent with the constraint that each server can communicate with at most \(\Delta\) users.

     We now design the message transmitted by the \(\delta\)-th server in the group \((i,j)\). Note that this message is intended for users indexed from $(i-1)\Delta + 1$ to $\min(i\Delta, L)$. In order to design the transmission, consider the submatrix of \(\mathbf{D}_{i,j}\), denoted with \(\mathbf{D}_{i,j}^\delta\), formed by removing the columns indexed by \(\mathcal{M}_\delta^{(i,j)}\). Observe that the size of the matrix \(\mathbf{D}_{i,j}^\delta\) is \(\Delta \times (\Delta-1)\) and \(\mathrm{rank}(\mathbf{D}_{i,j}^\delta)=\Delta-1\). This is a consequence of the fact that \(\mathbf{D}_{i,j}^\delta\) consists of \(\Delta-1\) linearly independent columns from the originally chosen \(\Delta\) linearly independent columns of \(\mathbf{D}_{i,j}\). Now, we find a non-trivial vector \(\boldsymbol{\nu}_{i,j}^\delta \in \mathbb{F}^{1 \times \Delta}\) in the left nullspace of \(\mathbf{D}_{i,j}^\delta\). Then the transmission from the \(\delta\)-th server in the group \((i,j)\) is
     \begin{equation}
         \label{eq:trans}
        x_{i,j}^\delta =  \boldsymbol{\nu}_{i,j}^\delta\mathbf{D}_{i,j}[f_{(j-1)(\Delta+M-1)+1},
    \dots, f_{\min(j(\Delta+M-1),K)}]^\intercal.
     \end{equation}
    Note that $\boldsymbol{\nu}_{i,j}^\delta$ is designed such that
$\boldsymbol{\nu}_{i,j}^\delta\mathbf{D}_{i,j}^{\delta} = \mathbf{0}$. Consequently, the support of the vector $\boldsymbol{\nu}_{i,j}^{\delta}\mathbf{D}_{i,j}$ is precisely the set $\mathcal{M}_{\delta}^{(i,j)}$, which enables the $\delta$-th server to construct the transmission message in \eqref{eq:trans}. Corresponding to each sub-demand matrix there are \(\Delta\) computing servers, each making a single transmission of subfunction output size\footnote{If $\mathbf{D}_{i,\left\lceil \frac{K}{\Delta+M-1} \right\rceil}$ has fewer than $\Delta$ columns, the number of computing servers can be chosen to equal the number of columns. Each server computes the subfunction associated with a distinct column index of $\mathbf{D}_{i,\left\lceil \frac{K}{\Delta+M-1} \right\rceil}$ and transmits the corresponding output to the connected users, yielding a communication cost equal to the number of columns, which is upper bounded by $\Delta$.}. Therefore, the rate achieved by the scheme is 
     \begin{equation*}
        R_{\text{ach}}(K,L,M,\Delta) = \Delta\left\lceil\frac{L}{\Delta}\right\rceil\left\lceil{\frac{K}{\Delta+M-1}}\right\rceil.
    \end{equation*}
We now show that each user can decode its requested function. Consider user \(\ell\), where \(\ell \in [(i'-1)\Delta+1:\min(i'\Delta,L)]\) for some \(i'\in \left[\left\lceil\frac{L}{\Delta}\right\rceil\right]\). Alternatively, we have \(\ell = (i'-1)\Delta+\ell_{i'}\) for some \(i'\in \left[\left\lceil\frac{L}{\Delta}\right\rceil\right]\) and \(\ell_{i'}\leq \Delta\). Then, for \(\ell\in [L]\), the requested function of the \(\ell\)-th user can be expressed as 
\begin{align}
    &F_\ell(\mathcal{W})=\mathbf{D}(\ell,:)\mathbf{f}= \sum_{j=1}^{\left\lceil\frac{K}{\Delta+M-1}\right\rceil}\mathbf{D}_{i',j}(\ell_{i'},:)[f_{(j-1)(\Delta+M-1)+1},\notag\\&
     \qquad\qquad\qquad\qquad\qquad\qquad \dots, f_{\min(j(\Delta+M-1),K)}]^\intercal\label{eq:decodesum}.
\end{align}  

Recall that user \(\ell\) receives transmissions from all servers in groups \((i',j)\), for all \(j \in \left[\left\lceil \frac{K}{\Delta + M - 1} \right\rceil\right]\). Therefore, user \(\ell\) gets \(x_{i',j}^\delta\) for every \(j \in \left[\left\lceil \frac{K}{\Delta + M - 1} \right\rceil\right]\) and \(\delta\leq \Delta\). For every \(j \in \left[\left\lceil \frac{K}{\Delta + M - 1} \right\rceil\right]\), user \(\ell\) has
    \begin{equation}
        \label{eq:decode1}
        \mathbf{X}_{i',j} = \mathbf{V}_{i',j}\mathbf{D}_{i',j}[f_{(j-1)(\Delta+M-1)+1},
    \dots, f_{\min(j(\Delta+M-1),K)}]^\intercal
    \end{equation}
     where \(\mathbf{X}_{i',j} \triangleq [x_{i',j}^1;x_{i',j}^2;\dots;x_{i',j}^\Delta]\) and \(\mathbf{V}_{i',j}\triangleq [\boldsymbol{\nu}_{i',j}^1;\boldsymbol{\nu}_{i',j}^2;\dots;\boldsymbol{\nu}_{i',j}^\Delta]\). 
     
     We now show that the matrix \(\mathbf{V}_{i,j}\) is invertible for any \(i\) and \(j\). In other words, we show that \(\mathrm{rank}(\mathbf{V}_{i,j} )=\Delta\), for every \(i\) and \(j\). Suppose \(\mathrm{rank}(\mathbf{V}_{i,j})<\Delta\), then there exists a row \(\boldsymbol{\nu}_{i,j}^{\delta'}\) of \(\mathbf{V}_{i,j}\) which can be represented as 
     \begin{equation}
     \label{eq:contra0}
         \boldsymbol{\nu}_{i,j}^{\delta'} = \sum_{\delta=1,\delta\neq \delta'}^\Delta \alpha_\delta \boldsymbol{\nu}_{i,j}^{\delta}
     \end{equation}
     where \(\alpha_\delta\in \mathbb{F}\) and \(\alpha_\delta \neq 0\) for some \(\delta\). On one hand, \(\boldsymbol{\nu}_{i,j}^{\delta'}\) is a vector in the left nullspace of \(\mathbf{D}_{i,j}^{\delta'} \), and thus
     \begin{equation}
         \label{eq:contra1}
          \boldsymbol{\nu}_{i,j}^{\delta'}\mathbf{D}_{i,j,\mathcal{Y}} = \beta_{\delta'}\mathbf{e}_{\delta'}^\intercal
     \end{equation}
     where \(\mathbf{D}_{i,j,\mathcal{Y}}\) is the submatrix of \(\mathbf{D}_{i,j}\) obtained by choosing only the columns indexed with the set \(\mathcal{Y}_{i,j}\), and \(\beta_{\delta'}\in \mathbb{F}\backslash \{0\}\) and \(\mathbf{e}_{\delta'}\in \mathbb{F}^{\Delta \times 1}\) is an \(\Delta\)-length vector with a \(1\) in the \(\delta'\)-th position and \(0\)-s elsewhere. 
     On the other hand, from \eqref{eq:contra0}, we have
     \begin{align}
         \boldsymbol{\nu}_{i,j}^{\delta'}&\mathbf{D}_{i,j,\mathcal{Y}} = \left(\sum_{\delta=1,\delta\neq \delta'}^\Delta \alpha_\delta \boldsymbol{\nu}_{i,j}^{\delta}\right)\mathbf{D}_{i,j,\mathcal{Y}}\notag\\
         & = \sum_{\delta=1,\delta\neq \delta'}^\Delta \alpha_\delta \bigg(\boldsymbol{\nu}_{i,j}^{\delta}\mathbf{D}_{i,j,\mathcal{Y}}\bigg)= \sum_{\delta=1,\delta\neq \delta'}^\Delta \alpha_\delta\left(\beta_\delta\mathbf{e}_\delta^\intercal\right)\label{eq:contra2}
     \end{align}
     where \(\beta_\delta \in \mathbb{F}\backslash \{0\}\), for every \(\delta\in [\Delta]\backslash \{\delta'\}\). Clearly, \eqref{eq:contra1} and \eqref{eq:contra2} contradict. Therefore, \(\mathbf{V}_{i,j}\) cannot have rank less than \(\Delta\). In other words, for every \(i\) and \(j\), we have \(\mathrm{rank}(\mathbf{V}_{i,j} )=\Delta\).

     Therefore, for every \(j \in \left[\left\lceil \frac{K}{\Delta + M - 1} \right\rceil\right]\), the user can obtain
 \begin{align*}
        \mathbf{D}_{i',j}[f_{(j-1)(\Delta+M-1)+1},
    \dots, f_{\min(j(\Delta+M-1),K)}]^\intercal = \mathbf{V}_{i',j}^{-1}\mathbf{X}_{i',j}.
    \end{align*}
  Consequently, user \(\ell\) obtains \(\mathbf{D}_{i',j}(\ell_{i'},:)[f_{(j-1)(\Delta+M-1)+1},
  \dots, f_{\min(j(\Delta+M-1),K)}]^\intercal\) for every \(j \in \left[\left\lceil \frac{K}{\Delta + M - 1} \right\rceil\right]\). Then, using \eqref{eq:decodesum}, the user can decode its desired function. This completes the proof of Theorem~\ref{thm1}.
\end{proof}
We now present an example of the proposed scheme and demonstrate the achieved rate.
\begin{example}
      Consider the  \((K=10,L=6,M=3,\Delta=3)\) distributed computing system. Let the demand matrix be
      {
      \begin{equation*}
          \mathbf{D}=\begin{bmatrix}
 1 & 1 & 1 & 1 & 1 & 1 & 1 & 1 & 1 & 1 \\
 1 & 2 & 3 & 4 & 5 & 6 & 2 & 8 & 9 & 10 \\
 1 & 4 & 9 & 5 & 3 & 3 & 5 & 8 & 4 & 1 \\
 1 & 3 & 4 & 9 & 4 & 1 & 2 & 6 & 3 & 10 \\
 1 & 5 & 6 & 3 & 9 & 2 & 4 & 4 & 5 & 1 \\
 1 & 6 & 7 & 1 & 9 & 5 & 10 & 1 & 10 & 10
          \end{bmatrix}\in \mathbb{F}_{11}^{6\times 10}.
      \end{equation*}}
      Upon knowing the user demands, the master node partitions \(\mathbf{D}\) into \(\mathbf{D}_{1,1}\), \(\mathbf{D}_{1,2}\), \(\mathbf{D}_{2,1}\), and \(\mathbf{D}_{2,2}\) where each \(\mathbf{D}_{i,j}\) is a matrix of dimension \(3\times 5\). Corresponding to each sub-demand matrix \(\mathbf{D}_{i,j}\), the master node assigns \(3\) computing servers. As a result, the system requires \(N=12\) servers. Servers in groups $(1,1)$ and $(1,2)$ are connected to users $1$, $2$, and $3$, and servers in groups $(2,1)$ and $(2,2)$ are connected to users $3$, $4$, and $5$. Due to space constraints, we describe only the dataset assignment and the transmissions corresponding to the servers in group \((1,1)\). Note that $\mathcal{K}_1 = \{1,2,3,4,5\}$ represents the set of column indices, or equivalently subfunctions, associated with the servers in this group.
      Further, the first three columns of \(\mathbf{D}_{1,1}\) are linearly independent, and thus we have \(\mathcal{Y}_{1,1} = \{1,2,3\}\). From \eqref{eq:datasetassign}, we have the dataset assignment in the servers in group \((1,1)\) as follows:
      \begin{equation*}
          \mathcal{M}_1^{(1,1)}=\{1,4,5\},~\mathcal{M}_2^{(1,1)}=\{2,4,5\},~ \mathcal{M}_3^{(1,1)}=\{3,4,5\}.
      \end{equation*}
      Note that the first server does not have access to \(W_2\) and \(W_3\). Now, we find a vector that resides in the left nullspace of the submatrix \( \mathbf{D}_{1,1}^1\), which comprises the columns \(\{2,3\}\) of \(\mathbf{D}_{1,1}\). Choose a vector, for example, $\boldsymbol{\nu}_{1,1}^{1} = [6,\,-5,\,1]$, from the left nullspace of $\mathbf{D}_{1,1}^{1}$. Then, the transmission made by the first server is
\begin{align}
    {x}_{1,1}^1 &= [6, -5, 1]\mathbf{D}_{1,1}[f_1(W_1),f_2(W_2),\dots,f_5(W_5)]^\intercal\notag\\
    &=2f_1(W_1)-9f_4(W_4)-16f_5(W_5).\label{eq:ex1x1}
\end{align}
 Similarly, the vectors \([3, -4, 1]\) and \([2, -3, 1]\) are in the left nullspaces of \(\mathbf{D}_{1,1}^2 \) and \( \mathbf{D}_{1,1}^3\), respectively. Therefore, the transmissions from server 2 and server 3 in group \((1,1)\) are
\begin{align}
    {x}_{1,1}^2 &= [3, -4, 1]\mathbf{D}_{1,1}[f_1(W_1),f_2(W_2),\dots,f_5(W_5)]^\intercal\notag\\
    &=-f_2(W_2)-8f_4(W_4)-14f_5(W_5) ~ \text{ and}\label{eq:ex1x2}\\
    {x}_{1,1}^3 &= [2, -3, 1]\mathbf{D}_{1,1}[f_1(W_1),f_2(W_2),\dots,f_5(W_5)]^\intercal\notag\\
    &=2f_3(W_3)-5f_4(W_4)-10f_5(W_5)\label{eq:ex1x3}
\end{align}
respectively. Note that users \(1,2,\) and \(3\) have access to \({x}_{1,1}^1\), \({x}_{1,1}^2\), and \({x}_{1,1}^3\). Therefore, from \eqref{eq:ex1x1}, \eqref{eq:ex1x2}, and \eqref{eq:ex1x3}, the users can decode the following 
\begin{align*}
\mathbf{D}_{1,1}(1,:) [f_1(W_1),\dots,f_5(W_5)]^\intercal &=\frac{1}{2}(x_{1,1}^1-2x_{1,1}^2+x_{1,1}^3)\\ 
\mathbf{D}_{1,1}(2,:) [f_1(W_1),\dots,f_5(W_5)]^\intercal &=\frac{1}{2}(x_{1,1}^1-4x_{1,1}^2+3x_{1,1}^3)\\
\mathbf{D}_{1,1}(3,:) [f_1(W_1),\dots,f_5(W_5)]^\intercal &=\frac{1}{2}(x_{1,1}^1-8x_{1,1}^2+9x_{1,1}^3).
\end{align*}
Similarly, user $u$, for $u = 1,2,3$, can compute
$\mathbf{D}_{1,2}(u,:) [f_6(W_6), \dots, f_{10}(W_{10})]^{\intercal}$ from the transmissions sent by the servers in group $(1,2)$. Finally, user $u$, for $u = 1,2,3$, can compute its requested function as
\begin{align*}
    F_u(\mathcal{W}) = \mathbf{D}_{1,1}(u,:)& [f_1(W_1),\dots,f_5(W_5)]^\intercal+\\
    &\mathbf{D}_{1,2}(u,:) [f_6(W_6),\dots,f_{10}(W_{10})]^\intercal.
\end{align*}
Similarly, users $4$, $5$, and $6$ can decode their requested functions from the transmissions from the servers in groups $(2,1)$ and $(2,2)$. Hence, the resulting rate is $R_{\text{ach}} = 12$.

For the same parameters $K=10$, $L=6$, and $\Delta=3$, the scheme presented in \cite{KhE} for lossless recovery of user demands achieves the same rate $R=12$, but requires a computation cost of $M=5$, whereas our scheme requires only $M=3$. This improvement is enabled by breaking away from the disjoint support assumption adopted in \cite{KhE}. It is worth noting, however, that our dataset assignment depends on the user demands, while the scheme in \cite{KhE} allows for a demand-agnostic dataset assignment.

\end{example}
  
\begin{rem}
    \label{rem:scheme}
    The achievability proof builds on the nullspace-based sparsification of the demand matrix proposed in \cite{NPME}, which considers a single-user multiple-function request model and applies the method to sub-blocks of size $L \times (L+M-1)$. In the present multi-user setting, the server-to-user connectivity constraint forces the same approach to be applied to smaller sub-blocks of size $\Delta \times (\Delta+M-1)$.
    
    This construction improves upon the tessellation-based solution in \cite{KhE}, where sub-blocks of size $\Delta \times M$ are chosen without an additional nullspace-based intra-block sparsification. The improvement, however, requires knowledge of the demand matrix, whereas the scheme in \cite{KhE} follows a demand-agnostic task assignment strategy.
\end{rem}

\section{Converse}
In this section, we derive a lower bound on the achievable rate of a $(K,L,M,\Delta)$ distributed computing system. First, we consider the finite field setting where the converse is developed using some counting arguments in the equivalent matrix factorization formulation. Using the derived lower bound, we demonstrate that our scheme approaches optimal performance as $q \rightarrow \infty$. When $q$ is finite, we show that our scheme is order-optimal. Later, we show that our scheme is optimal over $\mathbb{R}$ under certain specific conditions by deriving a matching lower bound using the idea of degrees of freedom in the matrix factorization formulation.

\begin{thm}
For any $(K,L,M,\Delta)$ linearly separable distributed computing problem defined over the finite field $\mathbb{F}_q$, the optimal rate $R^{*}
(K,L,M,\Delta)$ follows
{
\begin{equation}
\begin{aligned}
  R^{*}(K,&L,M,\Delta) \geq  \max\Big (L,\\
 &    \frac{LK}{\Delta + M+\log_q\binom{L}{\Delta}+\log_q\binom{K}{M}-\log_q(q-1)} \Big).
\end{aligned}
\label{eq:conv_lb}
\end{equation}}
\label{thm:lb}
\end{thm}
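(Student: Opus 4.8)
The plan is to convert the achievability of a rate $R$ into a counting statement about sparse matrix factorizations. By the system model and \eqref{eq:decomp}, if $R$ is achievable then for \emph{every} $\mathbf{D}\in\mathbb{F}_q^{L\times K}$ there is a factorization $\mathbf{D}=\mathbf{C}\mathbf{A}$ with $\mathbf{C}\in\mathbb{F}_q^{L\times R}$ having $\Delta$-sparse columns and $\mathbf{A}\in\mathbb{F}_q^{R\times K}$ having $M$-sparse rows; after appending zero columns to $\mathbf{C}$ and zero rows to $\mathbf{A}$ we may assume the inner dimension is exactly $R$. The first bound $R\geq L$ is then immediate: taking $\mathbf{D}$ to be a full-rank worst-case demand matrix gives $L=\mathrm{rank}(\mathbf{D})\le\mathrm{rank}(\mathbf{C})\le R$.

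For the stronger bound I would count. Let $\mathcal{S}$ be the set of admissible pairs $(\mathbf{C},\mathbf{A})$ of this shape in which, additionally, $\mathbf{A}$ has no all-zero row — this is without loss of generality, since any zero row of $\mathbf{A}$ can be replaced by a $1$-sparse standard basis row while the matching column of $\mathbf{C}$ is set to zero, leaving the product unchanged. Then $(\mathbf{C},\mathbf{A})\mapsto\mathbf{C}\mathbf{A}$ maps $\mathcal{S}$ onto $\mathbb{F}_q^{L\times K}$. The key observation, which is what produces the $-\log_q(q-1)$ refinement over the naive bound, is that every fiber of this map has size at least $(q-1)^R$: for any diagonal $\mathbf{\Lambda}=\mathrm{diag}(\lambda_1,\dots,\lambda_R)$ with $\lambda_r\in\mathbb{F}_q^{\ast}$, the pair $(\mathbf{C}\mathbf{\Lambda}^{-1},\mathbf{\Lambda}\mathbf{A})$ is again admissible (scaling columns/rows preserves the sparsity patterns and non-vanishing of rows), yields the same product, and distinct $\mathbf{\Lambda}$ give distinct pairs because $\mathbf{A}$ has no zero row. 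Hence $q^{LK}(q-1)^{R}\le|\mathcal{S}|$.

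It then remains to upper bound $|\mathcal{S}|\le\big(\#\{\Delta\text{-sparse columns in }\mathbb{F}_q^{L}\}\big)^{R}\cdot\big(\#\{M\text{-sparse rows in }\mathbb{F}_q^{K}\}\big)^{R}$. Using $\binom{L}{d}\binom{L-d}{\Delta-d}=\binom{L}{\Delta}\binom{\Delta}{d}$, hence $\binom{L}{d}\le\binom{L}{\Delta}\binom{\Delta}{d}$ for $0\le d\le\Delta\le L$, one gets $\sum_{d=0}^{\Delta}\binom{L}{d}(q-1)^{d}\le\binom{L}{\Delta}q^{\Delta}$, and similarly $\sum_{s=0}^{M}\binom{K}{s}(q-1)^{s}\le\binom{K}{M}q^{M}$. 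Substituting gives $q^{LK}(q-1)^{R}\le\big(\binom{L}{\Delta}\binom{K}{M}q^{\Delta+M}\big)^{R}$; taking $\log_q$ and rearranging yields exactly $R\ge LK\big/\big(\Delta+M+\log_q\binom{L}{\Delta}+\log_q\binom{K}{M}-\log_q(q-1)\big)$, which together with $R\ge L$ proves the theorem. The step I expect to require the most care is the fiber-size bound: one must ensure the $(q-1)^{R}$ diagonal rescalings stay inside the admissible set and remain distinct, which is precisely why the factorization must first be put in the normalized form with $\mathbf{A}$ having no zero rows before the orbit argument is invoked.
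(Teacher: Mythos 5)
Your proof is correct and follows the same counting argument as the paper: compare the $q^{LK}$ possible demand matrices against the number of admissible sparse factorizations, discounting the $(q-1)^R$ diagonal-rescaling redundancy, and take $\log_q$. Your treatment is in fact somewhat more careful than the paper's --- you normalize away zero rows of $\mathbf{A}$ so that every fiber genuinely contains a full orbit of size $(q-1)^R$, and you verify $\sum_{d=0}^{\Delta}\binom{L}{d}(q-1)^{d}\le\binom{L}{\Delta}q^{\Delta}$ explicitly rather than invoking the cruder support-based overcount --- but the underlying idea and the resulting bound are identical.
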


\begin{proof}
Considering the worst-case scenario, we can assume that users' requests are linearly independent in $f_k(W_k)$'s. Therefore, we have
\begin{equation}
 R^{*}(K,L,M,\Delta) \geq L.
 \label{eq:con_triv}
\end{equation}
Now, let us look at the $(K,L,M,\Delta)$ distributed computing problem through the lens of matrix factorization.  Given the demand matrix $\mathbf{D} \in \mathbb{F}_q^{L \times K}$ of rank $L$, we need to design two matrices $\mathbf{C} \in \mathbb{F}_q^{L \times R}$ and $\mathbf{A} \in \mathbb{F}_q^{R \times K}$, where $R \geq L$, such that $\mathbf{D}=\mathbf{C}\mathbf{A}$ subject to $||\mathbf{C}(:,r)||_0 \leq \Delta$ and $||\mathbf{A}(r,:)||_0 \leq M$, $\forall r \in [R]$. Given $L$ and $K$, the total possibilities for $\mathbf{D}$ over a finite field $\mathbb{F}_q$ is $q^{LK}$. Since $\mathbf{D} = \mathbf{C}\mathbf{A}$, the number of choices for the product $\mathbf{C}\mathbf{A}$ is also $q^{LK}$. However, the maximum possible choices for the product $\mathbf{C}\mathbf{A}$ is given by
\begin{align}
\frac{(\binom{L}{\Delta}q^{\Delta})^R(\binom{K}{M}q^M)^R}{(q-1)^R}.
\label{eq:count_conv}
\end{align}

The numerator in \eqref{eq:count_conv} represents the product of the number of distinct $\mathbf{C}$ and $\mathbf{A}$ matrices. The matrix $\mathbf{C}$ is of dimension $L \times R$ with entries from $\mathbb{F}_q$, and its columns are $\Delta-$sparse. Therefore, $\mathbf{C}$ has $(\binom{L}{\Delta}q^{\Delta})^R$ possibilities. Similarly, the total possibilities for $\mathbf{A}$ is $(\binom{K}{M}q^M)^R$ as $\mathbf{A} \in \mathbb{F}_q^{R \times K}$ and its rows are $M-$sparse. For a chosen $\mathbf{C}$ and $\mathbf{A}$, their scaled versions of the form $\mathbf{C}^{\prime}=\mathbf{C}{\Lambda}$ and $\mathbf{A}^{\prime}=\mathbf{\Lambda}^{-1}\mathbf{A}$, where $\Lambda$ is a $R \times R$ diagonal matrix defined as $\Lambda:=diag(\lambda_1,\lambda_2,\ldots,\lambda_R)$, also result in the same product. Note that the columns of $\mathbf{C}^{\prime}$ and rows of $\mathbf{A}^{\prime}$ are $\Delta-$sparse and $M-$sparse, respectively, as $\lambda_{r} \neq 0$ for all $r \in [R]$. Notice that the same product matrix is obtained if any other dense transformation matrix is used in place of $\Lambda$. However, the resulting $\mathbf{C}^{\prime}$ and $\mathbf{A}^{\prime}$ do not satisfy the required sparsity constraints. Therefore, the number of distinct product matrices $\mathbf{C}$ and $\mathbf{A}$ jointly generate is given by \eqref{eq:count_conv}.
Thus, we get the following relation
\begin{align}
q^{LK} \leq \frac{(\binom{L}{\Delta}q^{\Delta})^R(\binom{K}{M}q^M)^R}{(q-1)^R}.
\label{eq:conv_ineq}
\end{align}
Taking the logarithm of \eqref{eq:conv_ineq} and rearranging gives 
\begin{align}
R \geq \frac{LK}{\Delta + \log_q \binom{L}{\Delta} + M +\log_q\binom{K}{M}-\log_q(q-1)}.
\label{eq:conv_count}
\end{align}
Since $R^{*}(K,L,M,\Delta)$ is given by the infimum of achievable rates, we obtain \eqref{eq:conv_lb} from \eqref{eq:con_triv} and \eqref{eq:conv_count}.
\end{proof}

\begin{cor}
  When the field size $q \rightarrow \infty$, the lower bound on the optimal rate becomes 
  \begin{equation*}
  R^{*}(K,L,M, \Delta) \geq \frac{LK}{\Delta + M-1}.
  \end{equation*}
  \label{cor:largeq}
  \end{cor}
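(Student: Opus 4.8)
The plan is to take the limit $q \to \infty$ in the bound \eqref{eq:conv_lb} of Theorem~\ref{thm:lb}, treating $K$, $L$, $M$, $\Delta$ as fixed and analyzing the denominator
\[
D(q) \;\triangleq\; \Delta + M + \log_q\binom{L}{\Delta} + \log_q\binom{K}{M} - \log_q(q-1)
\]
as a function of $q$ alone. First I would dispose of the binomial terms: since $\binom{L}{\Delta}$ and $\binom{K}{M}$ are fixed positive constants, $\log_q\binom{L}{\Delta} = \ln\binom{L}{\Delta}/\ln q \to 0$ and $\log_q\binom{K}{M} \to 0$ as $q \to \infty$. Next I would treat the term $-\log_q(q-1)$: writing $\log_q(q-1) = \ln(q-1)/\ln q = 1 - \ln(q/(q-1))/\ln q$ and noting that $\ln(q/(q-1)) \to 0$ while $\ln q \to \infty$, we obtain $\log_q(q-1) \to 1$. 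Hence $D(q) \to \Delta + M - 1$, and therefore the second argument of the maximum in \eqref{eq:conv_lb} satisfies $LK/D(q) \to LK/(\Delta+M-1)$.

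Combining this limit with the trivial bound $R^{*} \ge L$ from \eqref{eq:con_triv}, the limiting form of \eqref{eq:conv_lb} reads $R^{*}(K,L,M,\Delta) \ge \max\{L,\, LK/(\Delta+M-1)\}$, which in particular implies the stated bound $R^{*}(K,L,M,\Delta) \ge LK/(\Delta+M-1)$ (the second term being the active one whenever $K \ge \Delta+M-1$, i.e.\ in the non-trivial regime).

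There is no genuine obstacle here; the only point worth a line of justification is monotonicity, which pins down the sense in which ``$q \to \infty$'' yields the stated value. Since $\log_q c$ is nonincreasing in $q$ for any constant $c \ge 1$, and $\log_q(q-1)$ is nondecreasing in $q$, the denominator $D(q)$ is nonincreasing; consequently the Theorem~\ref{thm:lb} bound $R^{*} \ge LK/D(q)$ improves monotonically with $q$, and its supremum over all finite $q$ equals the claimed limit $LK/(\Delta+M-1)$. Thus the corollary follows directly from Theorem~\ref{thm:lb} by this elementary limiting computation.
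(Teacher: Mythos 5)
Your proposal is correct and follows essentially the same route as the paper: take the limit of the denominator in \eqref{eq:conv_lb}, noting that $\log_q\binom{L}{\Delta}\to 0$, $\log_q\binom{K}{M}\to 0$, and $\log_q(q-1)\to 1$ as $q\to\infty$. Your added remark on the monotonicity of the bound in $q$ is a welcome extra precision but does not change the argument.
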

\begin{proof}
The proof follows directly from \eqref{eq:conv_lb}. As $q \rightarrow \infty$, the terms $\log_q\binom{L}{\Delta} \rightarrow 0$, $\log_q\binom{K}{M} \rightarrow 0$, and $\log_q(q-1) \rightarrow 1$.
\end{proof}

\begin{rem}
Corollary \ref{cor:largeq} implies that as the field size $q \rightarrow \infty$, the lower bound in \eqref{eq:conv_count}  matches the achievable rate in Theorem \ref{thm1} when $\Delta | L$ and $(\Delta+M-1)|K$.
\end{rem}

\begin{thm}[Optimality gap]
When $\Delta | L$, $(\Delta + M-1) | K$ and  $q \geq K$, the achievable rate in Theorem \ref{thm1} satisfies 
\begin{equation*}
  1\leq \frac{R_{\text{ach}}(K,L,M,\Delta)}{R^{*}(K,L,M,\Delta)} \leq 3.
\end{equation*}
\label{thm:optgap}
\end{thm}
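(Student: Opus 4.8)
The plan is to play the two lower bounds from Theorem~\ref{thm:lb} against $R_{\text{ach}}$. Under the hypotheses $\Delta\mid L$ and $(\Delta+M-1)\mid K$, both ceilings in~\eqref{eq:thm1} disappear and $R_{\text{ach}}(K,L,M,\Delta)=\frac{LK}{\Delta+M-1}$; note also that $(\Delta+M-1)\mid K$ forces $K\ge \Delta+M-1\ge \Delta$, so no degenerate last block occurs. The lower bound $R_{\text{ach}}/R^{*}\ge 1$ is immediate, since $R_{\text{ach}}$ is an achievable rate while $R^{*}$ is the infimum of achievable rates, so the real content is the bound $R_{\text{ach}}\le 3R^{*}$.

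For the main case $\Delta+M\ge 3$ I would use the counting bound in~\eqref{eq:conv_lb}. Since $q\ge K\ge M$ and, because the rank-$L$ demand matrix used in the converse forces $K\ge L$, also $q\ge L\ge \Delta$, the crude estimates $\binom{K}{M}\le K^{M}\le q^{M}$ and $\binom{L}{\Delta}\le L^{\Delta}\le q^{\Delta}$ give $\log_q\binom{K}{M}\le M$ and $\log_q\binom{L}{\Delta}\le \Delta$; moreover $\log_q(q-1)\ge 0$ for $q\ge 2$. Plugging these into the denominator of~\eqref{eq:conv_lb} yields $R^{*}\ge \frac{LK}{2(\Delta+M)}$, so
\[
\frac{R_{\text{ach}}}{R^{*}}\le \frac{2(\Delta+M)}{\Delta+M-1}=2+\frac{2}{\Delta+M-1}\le 3,
\]
where the last step uses $\Delta+M-1\ge 2$.

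The remaining case $(\Delta,M)=(1,1)$ must be handled separately, since there the counting bound is too lossy (it yields only a factor of roughly $4$). Here I would argue directly: take the all-ones demand matrix, so every user needs $\sum_{k=1}^{K}f_k(W_k)$. With $M=1$ every message of every server is a scalar multiple of a single $f_k(W_k)$, and with $\Delta=1$ every message reaches exactly one user; therefore a user can reconstruct $\sum_{k=1}^{K}f_k(W_k)$ (which must hold for all values of the $W_k$'s) only if it receives at least one message about each of the $K$ subfunctions. This forces at least $K$ transmissions to each of the $L$ users, hence $R^{*}(K,L,1,1)\ge LK=R_{\text{ach}}(K,L,1,1)$, and the ratio equals $1$. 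Combining the two cases gives $R_{\text{ach}}/R^{*}\le 3$.

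The principal obstacle is exactly this corner case: for $\Delta=M=1$ the term $\log_q\binom{L}{1}+\log_q\binom{K}{1}$ can be as large as $2$ while $\Delta+M-1=1$, so Theorem~\ref{thm:lb} alone cannot beat a factor of $4$, and one must substitute the elementary support-counting argument above. A secondary point to be careful about is the inequality $q\ge L$ used in the main case, which follows from $q\ge K\ge L$, the inequality $K\ge L$ being implicit in the converse formulation (where $\mathbf{D}$ has rank $L$). Everywhere else the counting bound of Theorem~\ref{thm:lb} is strong enough on its own.
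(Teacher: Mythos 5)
Your proof is correct, and in the generic case it follows the same route as the paper: compare $R_{\text{ach}}=\frac{LK}{\Delta+M-1}$ (the ceilings vanish under the divisibility hypotheses) against the counting converse \eqref{eq:conv_lb}, bounding the binomial terms via $\binom{L}{\Delta}\le L^{\Delta}\le q^{\Delta}$ and $\binom{K}{M}\le K^{M}\le q^{M}$. The difference lies in how the constant $3$ is closed. The paper's proof silently discards the term $1-\log_q(q-1)$ from the numerator of \eqref{eq:boundexp}; since $\log_q(q-1)\le 1$ that term is \emph{nonnegative}, so discarding it is not justified, and retaining it (as you do, using only $-\log_q(q-1)\le 0$) gives
\[
\frac{R_{\text{ach}}}{R^{*}}\;\le\;2+\frac{2}{\Delta+M-1},
\]
which exceeds $3$ exactly when $\Delta=M=1$ (for instance $\Delta=M=1$, $L=K=q$ makes the denominator of \eqref{eq:conv_lb} equal to $4-\log_q(q-1)>3$). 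You correctly isolate this corner case and close it with a separate elementary argument: with $M=\Delta=1$ every transmission is a scalar multiple of a single subfunction delivered to a single user, so the all-ones demand matrix forces $R^{*}\ge LK=R_{\text{ach}}$ and the ratio is $1$. This two-case treatment is strictly more careful than the paper's one-line estimate and is, in fact, needed to make the stated factor of $3$ hold uniformly. The one loose end you share with the paper is the implicit assumption $L\le K$ (required so that $q\ge K$ yields $L^{\Delta}\le q^{\Delta}$, respectively $\log_q L\le 1$); you at least flag it explicitly as inherited from the rank-$L$ hypothesis in the converse, whereas the paper uses it without comment.
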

\begin{proof}
   Consider the expression $R_{\text{aux}}=\frac{LK}{\Delta + M-1}$. Comparing $R_{\text{aux}}$ with the lower bound in \eqref{eq:conv_count}, we get
   \begin{align}
         \frac{R_{\text{aux}}}{R^{*}}  \leq 1 + \frac{1+\log_q\binom{L}{\Delta}+\log_q\binom{K}{M}-\log_q(q-1)}{\Delta +M-1}
         \label{eq:boundexp}
   \end{align}
   Using the fact that $\log_q(q-1) \leq 1$ for $q\geq 2$, and upper bounding the binomial coefficient by $\binom{n}{k} \leq n^k$, \eqref{eq:boundexp} becomes
   \begin{equation*}
         \frac{R_{\text{aux}}}{R^{*}}  \leq 1 + \frac{\Delta\log_q{L}+M\log_q{K}}{\Delta +M-1}.
 \end{equation*}
When $q \geq K$, we obtain 
\begin{equation}
\frac{R_{\text{aux}}}{R^*} \leq 3
\label{eq:con_aux}
\end{equation}
as $\log_qL \leq 1$ and $\log_qK \leq 1$. When $\Delta | L$ and $(\Delta +M-1) |K$, we have $R_{\text{ach}}(K,L,M,\Delta) = R_{\text{aux}}$. Thus, we get
\begin{equation*}
 \frac{R_{\text{ach}}(K,L,M,\Delta)}{R^{*}(K,L,M,\Delta)}  \leq 3
 \end{equation*}
 when $\Delta | L$,  $(\Delta +M-1)|K$,  and $q\geq K$. 
\end{proof}
Note that when conditions $\Delta |L$ and $(\Delta+M-1)|K$ do not hold, the rate is within a factor of 8 from the lower bound.
When the underlying field is $\mathbb{R}$, we derive a lower bound on the rate similar to Theorem \ref{thm:lb} using the degrees of freedom of $\mathbf{D}$ and the degrees of freedom of $\mathbf{C}$ and $\mathbf{A}$.
\begin{thm}
For a $(K,L,M,\Delta)$ linearly separable distributed computing problem defined over $\mathbb{R}$, the optimal rate satisfies
\begin{align}
 R^{*}(K,L,M,\Delta) \geq \max\left(L,\frac{LK}{\Delta+M-1}\right).
 \label{eq:conv_real}
\end{align}
\label{thm:conv_real}
\end{thm}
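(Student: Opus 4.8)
The bound splits into two pieces. The inequality $R^{*}(K,L,M,\Delta)\ge L$ is immediate: in the worst case $\mathbf{D}$ has full row rank $L$, and any factorization $\mathbf{D}=\mathbf{C}\mathbf{A}$ with $\mathbf{C}\in\mathbb{R}^{L\times R}$ forces $R\ge\mathrm{rank}(\mathbf{D})=L$. The substance is the inequality $R^{*}(K,L,M,\Delta)\ge \frac{LK}{\Delta+M-1}$, which I would obtain by counting the real degrees of freedom in the factorization $\mathbf{D}=\mathbf{C}\mathbf{A}$, as hinted just before the theorem.

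Fix an inner dimension $R$ and let $\mathcal{D}_{R}\subseteq\mathbb{R}^{L\times K}$ be the set of demand matrices that admit a valid factorization $\mathbf{D}=\mathbf{C}\mathbf{A}$ with at most $R$ columns of $\mathbf{C}$, $\Delta$-sparse columns of $\mathbf{C}$, and $M$-sparse rows of $\mathbf{A}$. Partitioning according to the finitely many choices of the column supports $\mathbf{S}=(S_{1},\dots,S_{R})$ of $\mathbf{C}$ with $|S_{r}|\le\Delta$ and the row supports $\mathbf{T}=(T_{1},\dots,T_{R})$ of $\mathbf{A}$ with $|T_{r}|\le M$, we get $\mathcal{D}_{R}=\bigcup_{\mathbf{S},\mathbf{T}}\mathrm{Im}\,\phi_{\mathbf{S},\mathbf{T}}$, where $\phi_{\mathbf{S},\mathbf{T}}(\mathbf{C},\mathbf{A})=\mathbf{C}\mathbf{A}$ is a polynomial map on the corresponding coordinate subspace. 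A direct count gives $(\Delta+M)R$ parameters, which is too many; the extra $R$ that must be shed is exactly the diagonal-scaling redundancy used in the proof of Theorem~\ref{thm:lb}: $(\mathbf{C}\boldsymbol{\Lambda},\boldsymbol{\Lambda}^{-1}\mathbf{A})$ has the same supports and the same product for every invertible diagonal $\boldsymbol{\Lambda}=\mathrm{diag}(\lambda_{1},\dots,\lambda_{R})$. Pinning one nonzero entry of each column of $\mathbf{C}$ to $1$ therefore leaves $\mathrm{Im}\,\phi_{\mathbf{S},\mathbf{T}}$ unchanged while reducing the parameter space to dimension at most $(\Delta-1)R+MR=(\Delta+M-1)R$; columns of $\mathbf{C}$ that vanish identically correspond to useless messages and place $\mathbf{D}$ into $\mathcal{D}_{R-1}$, a case I would peel off by induction on $R$.

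Since the image of a $C^{1}$ map from a Euclidean space of dimension $d<LK$ has Lebesgue measure zero in $\mathbb{R}^{L\times K}$, it follows that whenever $(\Delta+M-1)R<LK$ each $\mathrm{Im}\,\phi_{\mathbf{S},\mathbf{T}}$ --- and hence the finite union $\mathcal{D}_{R}$, and hence $\bigcup_{R\le R_{0}}\mathcal{D}_{R}$ with $R_{0}=\big\lceil\frac{LK}{\Delta+M-1}\big\rceil-1$ --- is a null set. Thus almost every $\mathbf{D}\in\mathbb{R}^{L\times K}$ lies outside this union and cannot be realized with inner dimension $R\le R_{0}$, giving $R_{\mathbf{D}}(M,\Delta)\ge R_{0}+1\ge\frac{LK}{\Delta+M-1}$ for such $\mathbf{D}$. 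Maximizing over $\mathbf{D}$ and combining with $R^{*}\ge L$ yields \eqref{eq:conv_real}.

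The step I expect to require the most care is the reduction of the effective parameter count from $(\Delta+M)R$ to $(\Delta+M-1)R$: one must argue cleanly that normalizing one entry per column of $\mathbf{C}$ captures the entire image even in degenerate configurations (vanishing pivots, rank drops, repeated supports). I would make this rigorous via the induction on $R$ sketched above, or alternatively by invoking the fiber-dimension theorem for the dominant map $\phi_{\mathbf{S},\mathbf{T}}$, whose generic fiber contains the $R$-dimensional orbit $\{(\mathbf{C}\boldsymbol{\Lambda},\boldsymbol{\Lambda}^{-1}\mathbf{A})\}$; the elementary measure-zero argument seems preferable for brevity.
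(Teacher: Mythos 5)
Your proof is correct and follows essentially the same route as the paper's: a degrees-of-freedom count of the factorization $\mathbf{D}=\mathbf{C}\mathbf{A}$ in which the $(\Delta+M)R$ raw parameters are reduced by $R$ to account for the diagonal-scaling redundancy $(\mathbf{C}\boldsymbol{\Lambda},\boldsymbol{\Lambda}^{-1}\mathbf{A})$, compared against the $LK$ parameters of a generic demand matrix. The only substantive difference is that the paper leaves the inequality $\mathrm{DoF}(\mathbf{C}\mathbf{A})\le \mathrm{DoF}(\mathbf{C})+\mathrm{DoF}(\mathbf{A})-\mathrm{DoF}(\boldsymbol{\Lambda})$ at the heuristic level, whereas your support-partition, column-normalization, and measure-zero argument (the image of a $C^{1}$ map from a space of dimension below $LK$ is Lebesgue-null, so a worst-case $\mathbf{D}$ outside the finite union of such images exists) supplies the rigor that the paper's informal DoF bookkeeping omits.
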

\begin{proof}
A demand matrix $\mathbf{D} \in \mathbb{R}^{L\times K}$ of rank $L$ has $LK$ degrees of freedom (DoF).\footnote{The degrees of freedom of a matrix is defined in \cite{AnC, Matman2}. The degrees of freedom of a matrix can be viewed as the number of independent parameters that can be varied in the matrix.} 
Since $\mathbf{D}=\mathbf{CA}$, we have $\text{DoF}(\mathbf{D}) = \text{DoF}(\mathbf{C}\mathbf{A})\leq \text{DoF}(\mathbf{C})+\text{DoF}(\mathbf{A})$. The matrix $\mathbf{C} \in \mathbb{R}^{L \times R}$ has $\Delta-$sparse columns, therefore, $\text{DoF}(\mathbf{C})=\Delta R$. Similarly, $\text{DoF}(\mathbf{A})=M R$ as the matrix $\mathbf{A} \in \mathbb{R}^{R \times K}$ has $M-$sparse rows. Thus, we get 
\begin{align*}
LK \leq \Delta R + MR.
\end{align*}
which can be tightened by considering the redundant DoF as done in the finite field case (Theorem \ref{thm:lb}). i.e., the scaled versions of $\mathbf{C}$ and $\mathbf{A}$ having the following form $\mathbf{C\Lambda}$ and $\mathbf{\Lambda^{-1}A}$, where $\Lambda \in \mathbb{R}^{R \times R}$ is a diagonal matrix of rank $R$, also result in the same product. Therefore, we get $\text{DoF}(\mathbf{D}) \leq \text{DoF}(\mathbf{C}) + \text{DoF}(\mathbf{A})-\text{DoF}(\mathbf{\Lambda})$. Notice that the $\text{DoF}(\mathbf{\Lambda})=R$ as $\Lambda \in \mathbb{R}^{R \times R}$ is a diagonal matrix of rank $R$. Thus, we have
\begin{align*}
LK \leq \Delta R+MR-R 
\end{align*}
which results in 
\begin{align*}
R \geq \frac{LK}{\Delta +M -1}.
\end{align*}
This completes the proof of Theorem \ref{thm:conv_real}.
\end{proof}

\begin{rem}
For the real field, the achievable rate in Theorem \ref{thm1} is optimal when $\Delta | L$ and $(\Delta+M-1)|K$; otherwise, it is within a factor of $4$ from the optimal rate.
\end{rem}


\bibliographystyle{IEEEtran}
\bibliography{Mybibliography}

\end{document}